\documentclass[DIV=calc,paper=a4,fontsize=11pt,twocolumn]{scrartcl} 

\usepackage[english]{babel}
\usepackage[protrusion=true,expansion=true]{microtype}
\usepackage{amsmath,amsfonts,amsthm}
\usepackage{amssymb,stmaryrd,relsize}
\usepackage[final]{graphicx}
\usepackage{xcolor}
\usepackage[normal,small,hypcap,up,labelfont=bf,textfont=it]{caption}
\usepackage{epstopdf}
\usepackage{subfig}
\usepackage{booktabs}
\usepackage{fix-cm}
\usepackage{amssymb,amsfonts}
\usepackage{dsfont}
\usepackage{bbm}
\usepackage{pstricks}
\usepackage{cite}
\usepackage[utf8]{inputenc}
\usepackage[perpage,symbol*]{footmisc}
\usepackage[varg]{txfonts}
\usepackage{balance}
\usepackage{fancyhdr}
\PassOptionsToPackage{hyphens}{url}\usepackage[pdfencoding=auto,psdextra]{hyperref}
\usepackage{bookmark}
\usepackage{verbatim}
\usepackage{fontenc}
\usepackage{cuted}

\usepackage[shortlabels]{enumitem}

\usepackage{bm}
\usepackage{mathrsfs}

\DeclareCaptionFont{mycolor}{\color[HTML]{000000}}
\theoremstyle{definition}

\newtheorem{thm}{Theorem}

\allowdisplaybreaks  

\newcommand{\real}{{\mathbb R}}
\newcommand{\complex}{{\mathbb C}}

\newcommand{\tbullet}{\mathrel{\raise .2ex\hbox{\tiny$\bullet$}}} 

\newcommand{\hscript}{\mathcal{H}}

\newcommand{\trace}{\mathrm{tr\,}}
\newcommand{\rmave}{\mathrm{Ave\,}}
\newcommand{\rmdiag}{\mathrm{diag\,}}
\newcommand{\hhat}{\widehat{H}}
\newcommand{\ahat}{\widehat{A}}
\newcommand{\uhat}{\widehat{U}}

\newcommand{\ab}[1]{\left|#1\right|}
\newcommand{\brac}[1]{\left\{#1\right\}}
\newcommand{\paren}[1]{\left(#1\right)}
\newcommand{\sqbrac}[1]{\left[#1\right]}
\newcommand{\elbows}[1]{{\left\langle#1\right\rangle}}
\newcommand{\ket}[1]{{|#1\rangle}}
\newcommand{\bra}[1]{{\langle#1|}}

\usepackage{sectsty}													
\allsectionsfont{
\color[HTML]{31ADF3}\usefont{OT1}{phv}{b}{n}
}

\sectionfont{
\color[HTML]{31ADF3}\usefont{OT1}{phv}{b}{n}
}

\usepackage{fancyhdr}												
\usepackage{lettrine}
\newcommand{\initial}[1]{%
\lettrine[lines=3,lhang=0.3,nindent=0em]{
\color[HTML]{31ADF3}
{\textsf{#1}}}{}}

\usepackage{titling}															

\newcommand{\HorRule}{\color[HTML]{31ADF3}
\rule{\linewidth}{1pt}%
}

\pretitle{\vspace{-30pt} \begin{flushleft} \HorRule
\fontsize{34}{34} \usefont{OT1}{phv}{b}{n} \color[HTML]{31ADF3} \selectfont
}
\title{Quantum Conditional Stochastic Processes}					
\posttitle{\par\end{flushleft}\vskip 0.5em}

\preauthor{\begin{flushleft}\large \lineskip 0.5em \usefont{OT1}{phv}{b}{sl} \color[HTML]{31ADF3}}
\author{Stan Gudder\\[8pt]}											
\postauthor{\footnotesize \usefont{OT1}{phv}{m}{sl} \color[HTML]{000000}
Department of Mathematics, University of Denver, Denver, Colorado, USA. E-mail: \href{mailto:sgudder@du.edu}{sgudder@du.edu}\\[10pt]		
\scriptsize\usefont{OT1}{phv}{m}{n} \color[HTML]{31ADF3}{\textbf{Editors: \emph{James F. Glazebrook} \& \emph{Danko D. Georgiev}} }\\[5pt]
\color[HTML]{000000}{Article history: Submitted on March 28, 2026; Accepted on June 22, 2026; Published on June 23, 2026.}
\par\end{flushleft}\HorRule}

\date{}																				

\begin{document}
\maketitle
\thispagestyle{fancy} 			
\initial{Q}\textbf{uantum mechanics contains certain novel mathematical concepts. Among these are complex numbers, Hilbert spaces with their unitary and self-adjoint operators, states represented by complex vectors, superpositions of states, collapse of wave functions, Born's rule for probabilities and others. If we accept that quantum mechanics is probabilistic, then these concepts can be derived and they become secondary. In this work, we begin with what we call a \emph{conditional stochastic process}, which is based on real numbers and probabilities. As we shall see, such processes are defined by three simple axioms. We then use conditional stochastic processes to derive quantum mechanics by employing a correspondence called a \emph{dictionary}. We also show that the converse holds. That is, beginning with a quantum system, we employ the dictionary to derive a conditional stochastic process.\\ Quanta 2026; 15: 13--18.}

\begin{figure}[b!]
\rule{245 pt}{0.5 pt}\\[3pt]
\raisebox{-0.2\height}{\includegraphics[width=5mm]{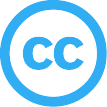}}\raisebox{-0.2\height}{\includegraphics[width=5mm]{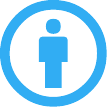}}
\footnotesize{This is an open access article distributed under the terms of the Creative Commons Attribution License \href{http://creativecommons.org/licenses/by/4.0/}{CC-BY-4.0}, which permits unrestricted use, distribution, and reproduction in any medium, provided the original author and source are credited.}
\end{figure}

\section{Introduction}  

Recently there have appeared a considerable number of papers on \emph{conditional stochastic processes} and their relationship to quantum mechanics \cite{bar23,bar24,bar25,bar252,bar26,cal26,fra12,glad20,sksv98}. The most mathematically complete is \cite{cal26}, but we shall follow the interesting papers \cite{bar23,bar25}. However, our work is simpler and more general. It is simpler in the sense that fewer parameters are involved and more general in the sense that it includes an infinite number of configurations and separable infinite dimensional Hilbert spaces. Such Hilbert spaces are general enough to include most of the work on quantum mechanics.

Until 1925, physics involved directly observable quantities such as position, momentum, energy, forces and electromagnetic fields. The main mathematical tool used for solving problems was differential equations. With the advent of quantum mechanics \cite{vN1932,vN1955}, a number of novel mathematical entities seemed to be necessary for physical descriptions. A complex Hilbert space
$\hscript$ appeared, states represented by vectors in $\hscript$, observables given by self-adjoint operators on $\hscript$, dynamics given by unitary operators, complex superpositions of states and state collapse upon measurement, Born's probability rule, etc. Where did all these things come from? The main object of this work is to show that they come from a semi-classical real stochastic process satisfying a few basic axioms. We also show that the converse holds. That is, given a quantum system, we can produce such a stochastic process that generates it.

\section{Conditional Stochastic Processes}  

A \emph{conditional stochastic process} (CSP) is a triple $(K,p,\mu )$ that satisfies the following conditions. We require that
$K=\brac{1,2,\ldots}$ is a countable set of \emph{configurations} and $\mu$~is a nontrivial probability measure on $K$ so that
$\mu (j)\ne 0$ for all $j\in K$. Thus, $0<\mu (j)\le 1$ for all $j\in K$ and \mbox{$\sum\limits _{j=1}^\infty\mu (j)=1$}. We also implicitly assume that this includes the special case in which $K=\brac{1,2,\ldots ,N}$ is finite. The \emph{conditional probability map}
$p\colon K\times\real\times K\to\sqbrac{0,1}\subseteq\real$ is denoted by $p(j,t\mid k)=p(j,t,k)$ where $t$ is time.
We call $p(j,t\mid k)$ the \emph{probability} that the system is at configuration $j\in K$ at time $t$ given it is at $k\in K$ at initial time $0$. We require that $p$ satisfies the \emph{normalization conditions}
\begin{equation}                
\label{eq21}
\sum _{j=1}^\infty p(j,t\mid k)=\sum _{k=1}^\infty p(j,t\mid k)=1
\end{equation}
as well as the \emph{trivialization condition}
\begin{equation}                
\label{eq22}
p(j,0\mid k)=\delta _{jk}
\end{equation}
for all $j,k\in K$, where $\delta _{jk}$ is the Kronecker delta function. The system's \emph{standalone probability distribution}
$p_t\colon K\times\real\to\sqbrac{0,1}$ is
\begin{equation}                
\label{eq23}
p_t(j)=\sum _{k=1}^\infty p(j,t\mid k)\mu (k)
\end{equation}
for every $j\in K$, $t\in\real$. By \eqref{eq21} we have that $p_t$ is a probability distribution because
\begin{align*}
\sum _{j=1}^\infty p_t(j)&=\sum _{j=1}^\infty\sum _{k=1}^\infty p(j,t\mid k)\mu (k) \\
    &=\sum _{k=1}^\infty\mu (k)\sum _{j=1}^\infty p(j,t\mid k)\\
        &=\sum _{k=1}^\infty\mu (k)=1
\end{align*}
Notice, by \eqref{eq22} we have
\begin{equation*}
p_0(k)=\sum _{r=1}^\infty p(k,0\mid r)\mu (r)=\sum _{r=1}^\infty\delta _{kr}\mu (r)=\mu (k)
\end{equation*}
so $\mu (k)$ is the probability the system is initially at configuration $k$. We can write \eqref{eq23} as
\begin{equation*}
p_t(j)=\sum _{k=1}^\infty p(j,t\mid k)p_0(k)
\end{equation*}
for all $j\in K,t\in\real$ which is called the \text{law of total probability}. In summary, a CSP is a triple $(K,p,\mu )$ satisfying
\eqref{eq21} and \eqref{eq22}. It is a simple probabilistic structure based on the real numbers.

A matrix of the form $U_t(j,k)=p(j,t\mid k)^{1/2}$ with indices $j,k\in K$ is the \emph{dynamical map} and is supposed to describe the time evolution of the system. If $f\colon K\to\real$ is a random variable we form the column vector $\ket{f}$ whose $j$th entry is $f(j)$. Then the time evolution of $f$ is given by the vector $\ket{U_tf}$ where $U_tf$ is the matrix product of $U_t$ and $\ket{f}$. But this is very limited since we only have this one dynamical map and there should be various ways a system evolves. This is where the complex numbers enter the situation! If we include complex numbers we have infinitely many possible dynamical maps! Letting $\complex$ be the set of complex numbers and $i$ the imaginary unit, if
$\theta\colon\real\times K^2\to\real$ where $\theta (0,j,k)=0$, we define a \emph{general dynamic map} as a matrix with entries $j,k$:
\begin{equation}                
\label{eq24}
U_t(j,k)=e^{i\theta (t,j,k)}p(j,t\mid k)^{1/2}\in\complex
\end{equation}
Since by \eqref{eq21} we have
\begin{align*}
\sum _{j=1}^\infty\ab{U_t(j,k)}^2 & =\sum _{j=1}^\infty p(j,t\mid k)=\sum _{k=1}^\infty p(j,t\mid k)\\
  &=\sum _{k=1}^\infty\ab{U_t(j,k)}^2=1
\end{align*}
so $U_t$ is a unitary matrix ($U_t^*U_t=U_tU_t^*=I$ with $U_t^*$ the conjugate transpose of $U_t$). This is precisely the type of evolutionary operators in quantum mechanics. We also have by \eqref{eq22} that
\begin{equation*}
U_0(j,k)=e^{i\theta (0,j,k)}p(j,0,k)^{1/2}=e^{i\theta (0,j,k)}\delta _{jk}=\delta _{jk}
\end{equation*}
so the matrix $U_0=I$. We have seen where the complex numbers come from and now consider the Hilbert space.

\section{The Dictionary}  

Let $\hscript =L_2(K,\mu )$ be the separable complex Hilbert space consisting of vectors $\ket{\psi}=(\psi _1,\psi _2,\ldots)$ where
$\psi _j\in\complex$ with squared norm
\begin{equation*}
\|\psi\|^2=\sum _{j=1}^\infty\ab{\psi _j}^2\mu (j)<\infty
\end{equation*}
and inner product $\elbows{\psi,\phi}=\sum\limits _{j=1}^\infty\overline{\psi}_j\phi _j\mu (j)$. We call the orthonormal basis on
$\hscript$ given by 
\begin{equation*}
e_j=\paren{0,0,\ldots ,\tfrac{1}{{\mu (j)}^{1/2}},0,\ldots}
\end{equation*}
where $\tfrac{1}{{\mu (j)}^{1/2}}$ is in the $j$th position, the \emph{configuration basis}. The rank-one projection operators
\begin{equation*}
P_j=\rmdiag (0,0,\ldots ,1,0,\ldots )=\ket{e_j}\bra{e_j}
\end{equation*}
where $1$ is in the $j$th position, are called the \emph{configuration projections}. These operators form a
\emph{projection valued measure} (PVM) and satisfy $P_jP_k=\delta _{jk}P_j$ for all $j,k\in K$ and
$\sum\limits _{j=1}^\infty P_j=I$. Letting $\trace$ be the trace, we have
\begin{align*}
\trace (U_t^*P_jU_tP_k)&=\sum _{r=1}^\infty\elbows{e_r,U_t^*P_jU_tP_ke_r}\\
&=\elbows{e_k,U_t^*P_jU_te_k}\\
&=\elbows{U_te_k,\ket{e_j}\bra{e_j}U_te_k}
=\ab{\elbows{e_j,U_te_k}}^2\\
&=\ab{U_t(j,k)}^2=p(j,t\mid k)
\end{align*}

The \emph{dictionary} \cite{bar23,bar25} given by
\begin{equation}                
\label{eq31}
p(j,t\mid k)=\trace (U_t^*P_jU_tP_k)
\end{equation}
for all $j,k\in K$, $t\in\real$, translates the CSP as expressed by the left-hand side of \eqref{eq31} and the representation as expressed by the right-hand side of \eqref{eq31}.
The standalone probability becomes
\begin{equation}                
\label{eq32}
p_t(j)=\sum _{k=1}^\infty p(j,t\mid k)\mu (k)=\trace\paren{U_t^*P_jU_t\sum _{k=1}^\infty\mu(k)P_k}
\end{equation}
We have thus shown that a CSP describes a complex separable Hilbert space and its concepts are represented by the dictionaries \eqref{eq31}, \eqref{eq32}. We now show that the converse holds. In this way a quantum system describes a CSP. 

Let $\hscript$ be a separable complex Hilbert space. Since two separable Hilbert spaces of the same dimension are isomorphic, we can assume that $\hscript =L_2(K,\mu )$ where $K=\brac{1,2,\ldots}$ and $0<\mu (j)\le 1$ with
$\sum\limits _{j=1}^\infty\mu (j)=1$. Let $U_t$, $t\in\real$ be unitary operators on $\hscript$ with $U_0=I$ and let $P_j$ be the rank-one projections on $H$ defined previously. Define $p\colon K\times\real\times K\to\sqbrac{0,1}$ as in the dictionary \eqref{eq31} by
\begin{equation*}
p(j,t\mid k)=\ab{U_t(j,k)}^2=\trace (U_t^*P_jU_tP_k)
\end{equation*}
and consider the triple $(K,p,\mu )$. Since $U_t$ is unitary we have for all $j\in k$ that
\begin{equation*}
\sum _{k=1}^\infty p(j,t\mid k)=\sum _{k=1}^\infty\ab{U_t(j,k)}^2=1
\end{equation*}
and for all $k\in K$ that
\begin{equation*}
\sum _{j=1}^\infty p(j,t\mid k)=\sum _{j=1}^\infty\ab{U_t(j,k)}^2=1
\end{equation*}
so \eqref{eq21} holds. Also, we have
\begin{equation*}
p(j,0\mid k)=\ab{U_0(j,k)}^2=\delta _{jk}
\end{equation*}
for all $j,k\in K$ so \eqref{eq22} holds. We conclude that $(K,p,\mu )$ is a CSP so the converse holds.

Let $(K,p,\mu )$ be a CSP and let $\hscript =L_2(K,\mu )$ be the corresponding quantum system. We now show that various quantum concepts on $\hscript$ are automatically described by $(K,p,\mu )$. For example, most quantum systems have dynamics given by unitary operators $\uhat _t=e^{it\hhat}$ where $\hhat$ is a self-adjoint operator
on $\hscript$ called the \emph{Hamiltonian}. Assuming that $\hhat$ has pure point spectrum with eigenvalues
$\lambda _j\in\real$ so that
$\hhat e_j=\lambda _je_j$, then $\uhat _te_j=e^{it\lambda _j}e_j$, $=1,2,\ldots\,$. Forming the corresponding CSP we have that $p(j,t\mid k)=\delta _{jk}$ and $U_t(j,k)=e^{it\lambda _j}\delta _{jk}$. In general, if $\ahat$ is a self-adjoint operator on
$\hscript$ with pure point spectrum $\lambda _j$ and $\ahat e_j=\lambda _je_j$, then we can represent $\ahat$ on the corresponding CSP by the random variable $A(j)=\lambda _j$, $j=1,2,\ldots\,$. Then $A$ has \emph{average value} at time $0$ given by
\begin{align*}
\rmave _0 & =\sum _{j=1}^\infty\lambda _jp_0(j)
=\sum _{j=1}^\infty\lambda _j\mu (j)
\end{align*}
which is the usual quantum mechanical value. Conversely, if $A(j)=\lambda _j$, $j=1,2,\ldots$, is a random variable on a CSP, then $\ahat =\sum\limits _{j=1}^\infty\lambda _jP_j$ is a self-adjoint operator on $L_2(K,\mu)$. We refer to our references \cite{bar23,bar24,bar25,bar252,bar26,cal26,fra12}
which show how essentially all the quantum mechanical concepts are reproduced using the dictionary.

\section{Induced States and Tensor Products}  

In this section we show that a CSP induces certain states on its corresponding quantum system. We also show that the natural product of two CSP's generates the tensor product of their corresponding quantum systems.

Let $(K,p,\mu )$ be a CSP and $L_2(K,\mu )$ be the corresponding quantum system. As we have seen these systems are related by the dictionary \eqref{eq31} and the standalone probability \eqref{eq32}. We call
\begin{equation*}
\rho _t=\sum _{k=1}^\infty\mu (k)U_tP_kU_t^*
\end{equation*}
an \emph{induced state} of the system $L_2(K,\mu )$ at time $t$. In this way, different functions $\theta (t,j,k)$ induce various quantum states on $L_2(K,\mu )$. Also, notice that $p_t(j)=\trace (\rho _tP_j)$ for all $t\in\real$, $j\in K$. The following results show that the dictionary can be written in a simpler way and that induced states have a usual form.

\onecolumn
\begin{thm}    
\label{thm41}
{\textrm{(a)}}\enspace For all $t\in\real$, $k\in K$,
\begin{equation*}
U_tP_kU_t^*=\ket{U_te_k}\bra{U_te_k}=P_{U_te_k}
\end{equation*}
is a one-dimensional projection so it is a pure quantum state.\newline
{\textrm{(b)}}\enspace An induced state $\rho _t$ is a trace~1 positive operation (a density operator) so it is a usual quantum state. Also, $\mu (k)$ is an eigenvalue of $\rho _t$ with corresponding eigenvector $U_te_k$.\newline
{\textrm{(c)}}\enspace A dictionary can be written:
\begin{equation*}
p(j,t\mid k)=\ab{\elbows{e_j,U_te_k}}^2
\end{equation*}
and we have
\begin{equation*}
p_t(j)= \sum _{k=1}^\infty\mu (k)\ab{\elbows{e_j,U_te_k}}^2
\end{equation*}
\end{thm}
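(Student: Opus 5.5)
The proof is a direct computation in $\hscript =L_2(K,\mu )$, using only that $P_k=\ket{e_k}\bra{e_k}$ for the orthonormal configuration basis $\{e_k\}$, the unitarity of $U_t$ established after \eqref{eq24}, and the trace computation preceding the dictionary \eqref{eq31}. I will prove (a) first, since (b) and (c) both reduce to it.

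For (a), I apply $U_tP_kU_t^*$ to an arbitrary vector $\psi$:
\[
U_t\ket{e_k}\bra{e_k}U_t^*\psi=\elbows{e_k,U_t^*\psi}U_te_k=\elbows{U_te_k,\psi}U_te_k .
\]
This gives $U_tP_kU_t^*=\ket{U_te_k}\bra{U_te_k}$. Because $U_t$ is unitary, $\|U_te_k\|=\|e_k\|=1$. Hence this operator is self-adjoint and idempotent with one-dimensional range spanned by $U_te_k$, so it is the projection $P_{U_te_k}$, which is a pure state.

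For (b), I use that $\{U_te_k\}_k$ is again an orthonormal basis, since unitaries preserve inner products and are surjective. By (a), $\rho _t=\sum_k\mu (k)P_{U_te_k}$, a sum of mutually orthogonal rank-one projections with nonnegative weights. The series converges in trace norm because $\sum_k\mu (k)=1$ and each term has trace norm $\mu (k)$. Positivity follows from
\[
\elbows{\psi,\rho _t\psi}=\sum_k\mu (k)\ab{\elbows{U_te_k,\psi}}^2\ge 0 .
\]
For the trace, I compute it in the basis $\{U_te_r\}$: the terms are $\elbows{U_te_r,\rho _tU_te_r}=\mu (r)$, so $\trace\rho _t=\sum_r\mu (r)=1$. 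The same orthogonality gives $\rho _tU_te_k=\mu (k)U_te_k$, so $\mu (k)$ is an eigenvalue with eigenvector $U_te_k$.

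For (c), the computation after the definition of $P_j$ already shows $\trace (U_t^*P_jU_tP_k)=\ab{\elbows{e_j,U_te_k}}^2$, so by \eqref{eq31} this equals $p(j,t\mid k)$. Substituting into \eqref{eq23} (equivalently \eqref{eq32}) gives the stated formula for $p_t(j)$.

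The only delicate point is the infinite-dimensional case in (b). There I must justify interchanging the sum defining $\rho _t$ with inner products and the trace. This follows from absolute convergence in trace norm, or from nonnegativity of all terms via Tonelli, and is routine.
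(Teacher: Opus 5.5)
Your proof is correct and is essentially the paper's argument: (a) from $U_t\ket{e_k}\bra{e_k}U_t^*=\ket{U_te_k}\bra{U_te_k}$, (b) from $\rho_t=\sum_k\mu(k)P_{U_te_k}$, and (c) by collapsing the trace to a single matrix entry (you reuse the Section~3 computation evaluated at $e_k$, while the paper uses cyclicity of the trace with (a) and evaluates at $e_j$), and you are more careful than the paper about the orthonormality of $\{U_te_k\}$ and trace-norm convergence. One caveat you inherit from the paper: the remarks after \eqref{eq24} only show that each row and column of $U_t$ has unit norm, which does not imply unitarity (e.g.\ $p(j,t\mid k)=\tfrac12$ for all $j,k\in\{1,2\}$ at some $t\ne 0$ with zero phases), so unitarity must be taken as a standing hypothesis, and it is exactly what your eigenvalue argument in (b) relies on.
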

\begin{proof}  
(a)\enspace For all $t\in\real$, $k\in K$ we have
\begin{equation*}
U_tP_kU_t^*=U_t\ket{e_k}\bra{e_k}U_t^*=\ket{U_te_k}\bra{U_te_k}=P_{U_te_k}
\end{equation*}
which is a one-dimensional projection.\\
(b)\enspace By (a) we have $\rho _t=\sum\limits _{k=1}^\infty\mu (k)P_{U_te_k}$ so $\rho _t$ is a positive operator with
$\trace (\rho _t)=\sum\limits _{k=1}^\infty\mu (k)=1$. Also $\mu (k)$ is an eigenvalue of $\rho _t$ with corresponding eigenvector $U_te_k$ for all $t\in\real$, $k\in K$.\newline
(c)\enspace By (a) and \eqref{eq31} we have
\begin{align*}
p(j,t\mid k)
&=\trace (U_t^*P_jU_tP_k)=\trace (P_jU_tP_kU_t^*) \\
&=\elbows{e_j,U_tP_kU_t^*e_j}
=\elbows{e_j,U_t\mid e_k}\elbows{e_k\mid U_t^*e_j} \\
& =\elbows{e_j,U_te_k}\elbows{U_te_k,e_j}
=\ab{\elbows{e_j,U_te_k}}^2
\end{align*}
Finally \eqref{eq32} gives
\begin{equation*}
p_t(j)=\sum _{k=1}^\infty\mu (k)p(j,t\mid k)=\sum _{k=1}^\infty\mu (k)\ab{\elbows{e_j,U_te_k}}^2
\qedhere
\end{equation*}
\end{proof}

We define 
\begin{equation*}
p_t(\rho _j)=p_t(j)=\trace (\rho _tP_j)=\elbows{e_j,\rho _te_j}
\end{equation*}
This is a special case of \emph{Born's rule}. Extending this by linearity, if $\psi=\sum\limits _{j=1}^\infty c_je_j$ is a unit vector, then
\begin{equation*}
p_t\paren{\ket{\psi}\bra{\psi}}=\trace\paren{\rho _t\ket{\psi}\bra{\psi}}=\elbows{\psi ,\rho _t\psi}
\end{equation*}
A \emph{quantum effect} is a positive operator $E$ on $L_2(K,\mu )$ with $\trace (E)=1$. An effect is a general quantum event. Extending our previous result by linearity, we have that if $E$ is an effect, then $p_t(E)=\trace (\rho _tE)$. This is a general Born's rule for the state $\rho _t$.

We now consider tensor products of Hilbert spaces and show that they occur naturally in this framework. Let
$(K_1,p_1,\mu _1)$ and $(K_2, p_2,\mu _2)$ be CSP's and form the triple
$(K_1\times K_2,p_1\times p_2,\mu _1\times\mu _2)$ where
\begin{align*}
p_1\times p_2(j_1,j_2,t\mid k_1\times k_2)&=p_1(j_1,t\mid k_1)p_2(j_2,t\mid k_2)
\intertext{and}
\mu _1\times\mu _2(j_1\times j_2)&=\mu _1(j_1)\mu _2(j_2)
\end{align*}

\newpage
\begin{thm}    
\label{thm42}
{\textrm{(a)}}\enspace The triple $(K_1\times K_2, p_1\times p_2,\mu _1\times\mu _2)$ is a CSP.\newline
{\textrm{(b)}}\enspace The quantum system for (a) is the tensor product $L_2(K_1,\mu _1)\otimes L_2(K_2,\mu _2)$.
\end{thm}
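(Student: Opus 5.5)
For (a) we check the defining conditions of a CSP on $K_1\times K_2$ directly. The product set is countable, and it can be enumerated as $\brac{1,2,\ldots}$ by any bijection with $\mathbb{N}$; in the finite case it has $N_1N_2$ elements. The measure $\mu_1\times\mu_2(j_1,j_2)=\mu_1(j_1)\mu_2(j_2)$ is strictly positive because each factor is. It is a probability measure because
\begin{equation*}
\sum_{j_1,j_2}\mu_1(j_1)\mu_2(j_2)=\paren{\sum_{j_1}\mu_1(j_1)}\paren{\sum_{j_2}\mu_2(j_2)}=1 .
\end{equation*}
Here we may rearrange freely since all terms are nonnegative (Tonelli for counting measure). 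The product map takes values in $\sqbrac{0,1}$ because both factors do.

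For the normalization conditions \eqref{eq21}, summing over $(j_1,j_2)$ with $(k_1,k_2)$ fixed factorizes:
\begin{equation*}
\sum_{j_1}p_1(j_1,t\mid k_1)\sum_{j_2}p_2(j_2,t\mid k_2)=1\cdot 1=1 .
\end{equation*}
Summing over $(k_1,k_2)$ with $(j_1,j_2)$ fixed works the same way using the other half of \eqref{eq21}. The trivialization condition \eqref{eq22} follows from
\begin{equation*}
p_1(j_1,0\mid k_1)\,p_2(j_2,0\mid k_2)=\delta_{j_1k_1}\delta_{j_2k_2}=\delta_{(j_1,j_2),(k_1,k_2)} .
\end{equation*}
This completes (a).

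For (b), the quantum system attached to this CSP is by construction $L_2(K_1\times K_2,\mu_1\times\mu_2)$. So we must show it is canonically isomorphic to $L_2(K_1,\mu_1)\otimes L_2(K_2,\mu_2)$, with the structure carried over. Let $e^{(1)}_{j_1}$ and $e^{(2)}_{j_2}$ be the configuration bases of the two factors, and $e_{(j_1,j_2)}$ the configuration basis of the product space. Define $W$ on the orthonormal basis $\brac{e^{(1)}_{j_1}\otimes e^{(2)}_{j_2}}$ of the tensor product by
\begin{equation*}
W\paren{e^{(1)}_{j_1}\otimes e^{(2)}_{j_2}}=e_{(j_1,j_2)} .
\end{equation*}
One checks that $e_{(j_1,j_2)}$ has entry $\paren{\mu_1(j_1)\mu_2(j_2)}^{-1/2}$ at $(j_1,j_2)$, which is exactly the pointwise product of the two factor basis functions. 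Hence $W$ agrees with the natural map $f\otimes g\mapsto f(j_1)g(j_2)$. Since $W$ sends an orthonormal basis onto an orthonormal basis, it extends to a unitary isomorphism.

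To show the dictionary is compatible, we also need these identities under $W$:
\begin{equation*}
P_{(j_1,j_2)}=P^{(1)}_{j_1}\otimes P^{(2)}_{j_2},\qquad \rho_t=\rho^{(1)}_t\otimes\rho^{(2)}_t .
\end{equation*}
The first is immediate from the basis correspondence. For the dynamics, take the dynamical maps $U^{(1)}_t$ and $U^{(2)}_t$ with phases $\theta_1,\theta_2$, and set $U_t=U^{(1)}_t\otimes U^{(2)}_t$. This has matrix entries
\begin{equation*}
e^{i(\theta_1+\theta_2)}\paren{p_1p_2}^{1/2},
\end{equation*}
so it is a general dynamic map \eqref{eq24} for the product CSP with $\theta=\theta_1+\theta_2$, and $\theta(0,\cdot,\cdot)=0$. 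Then
\begin{equation*}
\trace\paren{U_t^*P_{(j_1,j_2)}U_tP_{(k_1,k_2)}}=\trace(\cdots)_1\,\trace(\cdots)_2=p_1p_2,
\end{equation*}
using multiplicativity of the trace on tensor products. So \eqref{eq31} holds. The identity for $\rho_t$ follows by expanding $\sum\mu_1\mu_2\,U_tP_{k_1}\otimes P_{k_2}U_t^*$.

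The main obstacle is stating precisely what "is the tensor product" should mean, given that the paper's quantum system is only defined as $L_2(K,\mu)$. I would settle it by the unitary $W$ above, which matches configuration bases, configuration projections and admissible dynamical maps. I would also remark that not every dynamical map of the product CSP is a tensor product: general phases $\theta(t,(j_1,j_2),(k_1,k_2))$ need not split. The claim should therefore be read as an identification of Hilbert spaces together with the product structures, which is the minimal content (b) requires.
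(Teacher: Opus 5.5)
Your proof is correct and follows the same route as the paper. For (a) you factorize the normalization sums and multiply the Kronecker deltas; for (b) you take $U_t=U^{(1)}_t\otimes U^{(2)}_t$ and use multiplicativity of the trace on tensor products to recover $p_1p_2$ from the dictionary \eqref{eq31}. Beyond the paper, you make explicit the unitary $W$ identifying $L_2(K_1\times K_2,\mu_1\times\mu_2)$ with $L_2(K_1,\mu_1)\otimes L_2(K_2,\mu_2)$ and check that $\mu_1\times\mu_2$ is a strictly positive probability measure, both of which the paper leaves implicit, and you correctly keep the square root $(p_1p_2)^{1/2}$ in the matrix entries, which the paper's displayed formula for $U_{rt}$ omits.
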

\begin{proof}  
(a)\enspace For every $t\in\real$, $k_1\in K_1$, $k_2\in K_2$ we have
\begin{equation*}
\sum _{j_1,j_2=1}^\infty p_1\times p_2(j_1\times j_2,t\mid k_1\times k_2)
   =\sum _{j_1=1}^\infty p_1(j_1,t\mid k_1)\sum _{j_2=1}^\infty p_2(j_2,t\mid k_2)=1
\end{equation*}
and similarly for every $t\in\real$, $j_1\in K_1$, $j_2\in K_2$ we have
\begin{equation*}
\sum _{k_1,k_2=1}^\infty p_1\times p_2 (j_1\times j_2,t\mid k_1\times k_2)=1
\end{equation*}
Moreover, we have
\begin{align*}
p_1\times p_2(j_1\times j_2,0\mid k_1\times k_2) =p_1(j_1,0\mid k_1)p_2(j_2,0\mid k_2)
   =\delta _{j_1,k_1}\delta _{j_2,k_2}=\delta _{j_1\times j_2,k_1\times k_2}
\end{align*}
(b)\enspace We show that the quantum system corresponding to the CSP of (a) is
$L_2(K_1,\mu _1)\otimes L_2(K_2,\mu _2)$. Define the unitary operators on $L_2(K_1,\mu _1)$ and
$L_2(K_2,\mu _2)$ given by
\begin{equation*}
U _{rt}(j_r,k_r)=e^{i\theta _r(t,j_r,k_r)}p(j_r,t\mid k_r),\quad r=1,2
\end{equation*}
We then have
\begin{align*}
\trace&(U_{1t}^*\otimes  U_{2t}^*P_{1j}\times P_{2j}U_{1t}\otimes U_{2t}P_{1k}\otimes P_{2k})
   =\trace (U_{1t}^*P_{1j}U_{1t}P_{1k})\trace (U_{2t}^*P_{2j}U_{2t}P_{2k})\\
   &=p_1(j_1,t\mid k_1)p_2(j_2,t\mid k_2)=p_1\times p_2(j_1\times j_2\mid k_1\times k_2)
\end{align*}
Also,
\begin{align*}
\trace&\Biggl(U_{1t}^*\otimes U_{2t}^*P_{1j}\otimes P_{2j}U_{1t}\otimes U_{2t}\Biggr.
   \left.\sum _{k_1,k_2=1}^\infty\mu _1\times\mu _2(k_1\times k_2)P_{1k_1}\otimes P_{2k_2}\right)\\
   &=\trace\paren{U_{1t}^*P_{1j}U_{1t}\sum _{k_1=1}^\infty\mu _1(k_1)P_{1k_1}}
   \trace\paren{U_{2t}^*P_{2j}U_{2t}\sum _{k_2=1}^\infty\mu _2(k_2)P_{2k}}\\
   &=p_1(j_1)p_2(j_2)=p_1\times p_2(j_1\times j_2)\qedhere
\end{align*}
\end{proof}

Finally, let $L_2(K_1,\mu _1)$, $L_2(K_2,\mu _2)$ be two quantum systems and let
$\rho _t^r=\sum\limits _{k_r=1}^\infty\mu _r(k_r)U_t^rP_{k_r}^rU_t^{r*}$, $r=1,2$ be their induced quantum states. Then the induced quantum state for $L_2(K_1,\mu _1)\otimes L_2(K_2,\mu _2)$ is given by
\begin{align*}
\rho _t^1\otimes\rho _t^2
   &=\sum _{k_1=1}^\infty\mu _1(k_1)U_t^1P_{k_1}^1U_t^{1*}\otimes
   \sum _{k_2=1}^\infty\mu _2(k_2)U_t^2P_{k_2}^2U_t^{2*}\\
   &=\sum _{k_1,k_2=1}^\infty\mu _1\times\mu _2(k_1\times k_2)
   (U_t^1\otimes U_t^2)(P_{k_1}^1\otimes P_{k_2}^2)(U_t^1\otimes U_t^2)^*
\end{align*}
which is the induced state corresponding to the CSP $(K_1\times K_2,p_1\times p_2,\mu _1\times\mu _2)$.

\section{Classical Stochastic Processes}  
We have used CSP $(K,p,\mu )$ to derive a quantum system. The purpose of this exercise is that a CSP is defined in terms of real numbers and satisfies three simple axioms \eqref{eq21}, \eqref{eq22}. In contrast, quantum mechanics employs complex numbers and is based on non-intuitive mathematical concepts such as Hilbert spaces, self-adjoint and unitary operators traces and inner products. However, it should be pointed out that a CSP is not really a classical probabilistic concept. This is because a classical stochastic process need not satisfy the condition
\begin{equation}                
\label{eq51}
\sum _{k=1}^\infty p(j,t\mid k)=1
\end{equation}

\twocolumn
To understand why, we define a \emph{classical stochastic process} to be a 4-tuple $(K,p,\mu ,f_t)$, where, as before,
$K=\brac{1,2,\ldots}$ and $\mu$ is a nontrivial probability measure on $K$. The functions $f_t\colon K\to K$, $t\in\real$ are random variables and $f_t(k)$ is the system's configuration at time $t$ when its configuration at initial time $0$ is
$k$. As before, $p\colon K\times\real\times K\to [0,1]$ but we write $p(j,t\mid\mid k)$ to distinguish it from the CSP function $p(j,t\mid k)$. We define $p(j,t\mid\mid k)$ to be the classical conditional probability that the system is at configuration $j$ at time $t$ given it is at configuration $k$ at initial time $0$. This probability is given classically by
\begin{align}   
\label{eq52}
p(j,t\mid\mid k)&=\frac{\mu\sqbrac{\brac{r\in K\colon f_t(r)=j}\cap\brac{r\colon f_0(r)=k}}}
   {\mu\sqbrac{\brac{r\colon f_0(r)=k}}}\notag\\
   &=\frac{\mu\sqbrac{f_t^{-1}(j)\cap f_0^{-1}(k)}}{\mu\sqbrac{f_0^{-1}(k)}} \notag\\
   & =\frac{\mu\sqbrac{f_t^{-1}(j)\cap\brac{k}}}{\mu\sqbrac{\brac{k}}}
\end{align}
where the last equality comes from $f_0^{-1}(k)=k$. The properties of $p(j,t\mid\mid k)$ are given in the next theorem.

\begin{thm}    
\label{thm51}
The classical conditional probability satisfies the following:\newline
{\textrm{(a)}}\enspace $\sum\limits _{j=1}^\infty p(j,t,\mid\mid k)=1$ for all $k\in K$, $t\in\real$,\newline
\smallskip
{\textrm{(b)}}\enspace $p(j,0\mid\mid k)=\delta _{jk}$ for all $j,k\in K$.\newline
\smallskip
{\textrm{(c)}}\enspace $\sum\limits _{k=1}^\infty p(j,t\mid\mid k)=\ab{f_t^{-1}(j)}=\hbox{cardinality of }f_t^{-1}(j)$
for all $j\in K$, $t\in\real$.
\end{thm}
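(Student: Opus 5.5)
The plan is to first simplify the classical conditional probability \eqref{eq52} into an indicator function, and then read off (a), (b) and (c) from it. Since $\mu$ is nontrivial, $\mu(\brac{k})=\mu(k)>0$, so the quotient in the last line of \eqref{eq52} is well defined. The set $f_t^{-1}(j)\cap\brac{k}$ is either $\brac{k}$ (when $f_t(k)=j$) or empty (otherwise). Hence
\begin{equation*}
p(j,t\mid\mid k)=\frac{\mu\sqbrac{f_t^{-1}(j)\cap\brac{k}}}{\mu(k)}=\delta_{j,f_t(k)}
\end{equation*}
for all $j,k\in K$ and $t\in\real$. This identity is the key step, and each part of the theorem then follows by summing it appropriately.

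For (a), fix $k$ and $t$ and sum $\delta_{j,f_t(k)}$ over $j\in K$. Exactly one term is nonzero, namely $j=f_t(k)$, because $f_t$ is a function into $K$, so the sum is $1$. For (b), I will use the convention already built into \eqref{eq52} that $f_0^{-1}(k)=\brac{k}$, i.e.\ $f_0$ is the identity map. This is forced by the interpretation of $f_t(k)$ as the configuration at time $t$ of the system that starts at $k$. Then $\delta_{j,f_0(k)}=\delta_{jk}$. For (c), fix $j$ and $t$ and sum over $k$: $\sum_k\delta_{j,f_t(k)}$ counts exactly those $k$ with $f_t(k)=j$, which is $\ab{f_t^{-1}(j)}$.

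The only delicate points are bookkeeping rather than analysis, and I expect the main one to be (b). The statement tacitly relies on $f_0=\mathrm{id}_K$, and I will state explicitly that this is part of the definition of a classical stochastic process, as the last equality in \eqref{eq52} presupposes. A smaller point concerns (c): $f_t^{-1}(j)$ may be empty or even infinite when $K$ is infinite. I will therefore read the equality in $[0,\infty]$, which is legitimate because every term of the series is nonnegative, so the sum is unconditionally defined.

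Finally, I will remark that (c) shows why \eqref{eq51} fails classically. The column sum equals $1$ for every $j$ and $t$ only when every $f_t$ is a bijection of $K$.
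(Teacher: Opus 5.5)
Your proposal is correct and follows essentially the same route as the paper: the paper also works directly from the last line of \eqref{eq52}, using $f_0^{-1}(j)=\{j\}$ for (b), and for (c) the observation that $\mu[f_t^{-1}(j)\cap\{k\}]/\mu(\{k\})$ equals $1$ exactly when $k\in f_t^{-1}(j)$ and $0$ otherwise. The only cosmetic difference is in (a): the paper sums using additivity of $\mu$ over the partition $\{f_t^{-1}(j)\}_{j\in K}$ of $K$, whereas you use the equivalent indicator identity $p(j,t\mid\mid k)=\delta_{j,f_t(k)}$.
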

\begin{proof}  
(a)\enspace Applying \eqref{eq52} we have for all $k\in K$, $t\in\real$
\begin{align*}
\sum _{j=1}^\infty p(j,t\mid\mid k) & =\frac{1}{\mu\sqbrac{\brac{k}}}\sum _{j=1}^\infty\mu\sqbrac{f_t^{-1}(j)\cap\brac{k}} \\
   & =\frac{1}{\mu\sqbrac{\brac{k}}}\mu\sqbrac{K\cap\brac{k}}=1
\end{align*}
(b)\enspace By \eqref{eq52} we obtain for all $j,k\in\real$
\begin{equation*}
p(j,0\mid\mid k)=\frac{\mu\sqbrac{f_0^{-1}(j)\cap\brac{k}}}{\mu\sqbrac{\brac{k}}}
   =\frac{\mu\sqbrac{\brac{j}\cap\brac{k}}}{\mu\sqbrac{\brac{k}}}=\delta _{jk}
\end{equation*}
(c)\enspace By \eqref{eq52} we obtain for all $j\in K$, $t\in\real$
\begin{align*}
\sum _{k=1}^\infty p(j,t\mid\mid k)&=\sum _{k=1}^\infty\frac{\mu\sqbrac{f_t^{-1}(j)\cap\brac{k}}}{\mu\sqbrac{\brac{k}}} \\
   & =\sum _{k\in f_t^{-1}(j)}\frac{\mu\sqbrac{\brac{k}}}{\mu\sqbrac{\brac{k}}}\\
   &=\sum _{k\in f_t^{-1}(j)}1=\ab{f_t^{-1}(j)}\qedhere
\end{align*}

We conclude that although $p$ satisfies the usual conditions given by
Theorem \ref{thm51} (a), (b), it satisfies
\begin{equation*}
\sum _{k=1}^\infty p(j,t\mid\mid k)=1
\end{equation*}
if and only if $\ab{f_t^{-1}(j)}=1$ for all $j\in K$ which is equivalent to $f_t$ being a bijection.
\end{proof}

\balance
\interlinepenalty=10000

\end{document}